\newtheorem{theorem}{Theorem}
\newtheorem{lemma}[theorem]{Lemma}
\newtheorem{definition}[theorem]{Definition}
\renewcommand{\title}[1]{{\noindent\bf\Large #1}}
\newcommand{\speaker}[1]{\noindent\underline{\bf #1} }
\newcommand{\coauthor}[1]{{\noindent\bf#1 }}
\newcommand{\affiliation}[1]{{\noindent\bf --- #1 }}
\newcommand{\II}[2]{[\![ #1 , #2 ]\!]}
\newcommand{\FunDef}[5]{
\begin{array}{ccccc}
#1 & : & #2 & \to & #3 \\
 & & #4 & \mapsto & #5 \\
\end{array}
}
\begin{document}
\title{Discrete Morse theory for the collapsibility of supremum sections}

\vspace{2ex}

%A coauthor, add as many as needed
\coauthor{Balthazar Bauer}
\affiliation{INRIA, DIENS, PSL research, CNRS, Paris, France}

% The speaker
\speaker{Lucas Isenmann}
\affiliation{LIRMM, Université de Montpellier, CNRS, Montpellier, France}

\vspace{2ex}

\begin{abstract}
    The Dushnik-Miller dimension of a poset $\le$ is the minimal number $d$ of linear extensions $\le_1, \ldots , \le_d$ of $\le$ such that $\le$ is the intersection of $\le_1, \ldots , \le_d$.
    Supremum sections are simplicial complexes introduced by Scarf~\cite{Scarf} and are linked to the Dushnik-Miller as follows: the inclusion poset of a simplicial complex is of Dushnik-Miller dimension at most $d$ if and only if it is included in a supremum section coming from a representation of dimension $d$.
    Collapsibility is a topological property of simplicial complexes which has been introduced by Whitehead~\cite{Whitehead} and which resembles shellability.
    While Ossona de Mendez~\cite{Ossona} proved that a particular type of supremum sections are shellable, we show in this article that supremum sections are in general collapsible thanks to the discrete Morse theory developped by Forman~\cite{Forman}.
\end{abstract}

\section{Introduction}

The order dimension (also known as the Dushnik-Miller dimension) of a poset $\le$ has been introduced by Dushnik and Miller \cite{DM}.
It is defined as the minimum number $d$ of linear extensions $\le_1, \ldots, \le_d$ of $\le$ such that $\le$ is the intersection of these extensions \textit{i.e.} $\forall x,y \in V, x \le y \iff (\forall i \in \II1d, x \le_i y)$.
See \cite{Trotter} for a comprehensive study of this topic.
This notion is important because, for example, of a theorem of Schnyder~\cite{Schnyder} which states that a graph is planar if and only if the Dushnik-Miller dimension of the inclusion poset of the associated simplicial complex is at most $3$.
Representations were introduced by Scarf~\cite{Scarf}.
A $d$-representation on a set $V$ is a set of $d$ linear orders on $V$.
Given $R$ a representation on a set $V$, we can define a simplicial complex $\Sigma(R)$ associated to this representation that we call its supremum section.
Scarf proved that every supremum section of a representation satisfying some additional properties, so called ``standard'', is the inclusion poset of a $d$-polytope with one face removed.
Ossona de Mendez~\cite{Ossona} proved that every abstract simplicial complex of Dushnik-Miller dimension at most $d$ is contained in a complex which is shellable and has a straight line embedding in $\mathbb{R}^{d-1}$.
Supremum sections also appeared in commutative algebra: Bayer \textit{et al.}~\cite{BPS} studied monomial ideals which are linked to supremum sections by what they call Scarf complexes.
They are used by Felsner \textit{et al.}~\cite{FK} in order to study orthogonal surfaces.
They also appear in the study of Gonçalves \textit{et al.}~\cite{GI} of a variant of Delaunay graphs and in the study of empty rectangles graphs by Felsner~\cite{Felsner}.
Furthermore, they also appear in spanning-tree-decompositions and in the box representations problem as shown by Evans \textit{et al.}~\cite{EFKU}.

% https://arxiv.org/abs/0907.2954

The goal of our article is to generalize the result of Ossona de Mendez about the shellability of standard supremum sections to every supremum sections.
As there exists supremum sections which are not shellable, for instance the simplicial complex characterized by its facets $\{a,b,c\}$ and $\{c,d,e\}$, we will replace shellability by collapsibility which is a similar notion.
A collapse is a topological operation on simplicial complexes, and more generally on CW-complexes, introduced by Whitehead~\cite{Whitehead} in order to define a simple homotopy equivalence which is a refinement of the homotopy equivalence.
A complex is said to be collapsible if it collapses to a point.
See \cite{Kozlov} for a comprehensive study of this topic.
The discrete Morse theory introduced by Forman~\cite{Forman} is based on this notion and has numerous applications in applied mathematics and computer science.
%voir https://www.sciencedirect.com/science/article/pii/S0012365X05002906#bib15 pour ça
Homotopy equivalence is a topological notion of topological spaces introduced to classify topological spaces.
Roughly speaking, two spaces are said to be homotopy equivalent if there exists a continuous deformation from one to the other.
A topological space is said to be contractible if it is homotopy equivalent to a point.
Collapsible spaces form an important subclass of contractible spaces.
While contractibility is algorithmically  undecidable by a result of Novikov~\cite{VKF}, the subclass of collapsible spaces is algorithmically  recognizable.
More precisely Tancer~\cite{Tancer} showed that it is NP-complete to decide whether a simplicial complex is collapsible.
Furthermore, every $1$-dimensional contractible complex is collapsible but the house with two rooms \cite{Bing} and the dunce hat \cite{Zeeman} show that there are complexes which are contractible but not collapsible.
Finally, the conjecture of Zeeman~\cite{Zeeman}, which implies the Poincarré conjecture, states that for every finite contractible $2$-dimensional CW-complex $K$, the space $K \times [0,1]$ is collapsible.

% There is a simple reason for appreciating collapsible objects: a collapsible (PL) n-manifold is always (PL) homeomorphic to a disc! (Although a contractible one may not, for instance in dimension 4.) For a proof, see [Rourke C.P., Sanderson B.J. Introduction to piecewise-linear topology (Springer, 1972)]. 

% https://link.springer.com/article/10.1007%2Fs10711-008-9231-7
% https://mathoverflow.net/questions/34703/is-the-topological-concept-of-collapsible-useful

% https://arxiv.org/pdf/1711.08436.pdf    Shellability NPC
% https://arxiv.org/pdf/1211.6254.pdf   Collapsibility NPC

\section{Notations}

In the following, $V$ is a finite set. 
An \textit{(abstract) simplicial complex} $\Delta$ is a subset of $\mathcal{P}(V)$ closed by inclusion (\textit{i.e.} $\forall X \in \Delta, \forall Y \subseteq X, Y \in \Delta$).
We call \textit{faces} the elements of $\Delta$ and \textit{facets} the maximal faces of $\Delta$ according to the inclusion order.

\begin{definition}[Ossona de Mendez~\cite{Ossona}]
    \label{definition:representation}
    Given a linear order $\le$ on a set $V$, an element $x \in V$, and a set $F \subseteq V$, we say that $x$ {\it dominates} $F$ in $\le$, and we denote it $F\le x$, if $f \le x$ for every $f \in F$. 
	A $d$-\textup{representation} $R$ on a set $V$ is a set of $d$ linear orders $\le_1 , \ldots , \le_d$ on $V$.	
	Given a $d$-representation $R$, an element $x \in V$, and a set $F \subseteq V$, we say that $x$ {\it dominates} $F$ in $R$ if $x$ dominates $F$ in some order $\le_i\in R$.
	We define $\Sigma(R)$ as the set of subsets $F$ of $V$ such that every $v \in V$ dominates $F$ in $R$.
	The set $\Sigma(R)$ is called the \textup{supremum section} of $R$.
\end{definition}

It is easy to show that if $R$ is a $d$-representation on a set $V$, then $\Sigma(R)$ is a simplicial complex.
An example is the following $3$-representation on $\{a,b,c,d,e\}$: $a <_1 b <_1 e <_1 d <_1 c$, $c <_2 b <_2 a <_2 d <_2 e$, and $e <_3 d <_3 c <_3 b <_3 a$.
The corresponding complex $\Sigma(R)$, depicted on the left of Figure~\ref{fig:the_fig}, is characterized by its facets $\{a,b\},
\{b,c,d\}$, and $\{b,d,e\}$.
For example $\{a,b,c\}$ is not in $\Sigma(R)$ as $b$ does not dominate $\{a,b,c\}$ in any order.

\begin{definition}
    Let $\Delta$ be a simplicial complex.
    We say that a face $F$ of $\Delta$ is a \textup{free face} of $\Delta$ if it is non-empty, non-maximal and contained in only one facet of $\Delta$.
    
    Let $\Delta$ and $\Gamma$ be two simplicial complexes.
    We say that $\Delta$ \textup{collapses} to $\Gamma$ if there exists $k$ simplicial complexes $\Delta_1, \ldots , \Delta_k$ and a free face $F_i$ of $\Delta_i$ for every $i \in \II1{k-1}$ such that $\Delta_1 = \Delta$ , $\Delta_{i+1} = \Delta_i \setminus \{ F \in \Delta_i :  F_i \subseteq F \}$ for every $i \in \II1{k-1}$ and $\Delta_k = \Gamma$. 
    We say that $\Delta$ is \textup{collapsible} if it collapses to a point.
\end{definition}

The Hasse diagram of a poset is the transitive reduction of the digraph of the poset.
Let $R$ be a representation on a set $V$, we denote $H(R)$ the Hasse diagram of the inclusion poset of $\Sigma(R)$.

\begin{definition}
    Let $(\le,V)$ be a poset and let $M$ be a matching of the Hasse diagram of $\le$.
    %A matching $M$ of the Hasse diagram of $\le$ is a subset of $V \times V$ such that $(a,b) \in M \Rightarrow a \le b$ and such that every element of $V$ appear at most $1$ time in a pair of the matching.
    For an arc $a$ of the Hasse diagram of $\le$, we denote $d(a)$ and $u(a)$ the elements of $V$ such that $a = (u(a),d(a))$ and $d(a) < u(a)$.
    A matching $M$ of the Hasse diagram of $\le$ is said to be \textup{acyclic} if, when reversing the orientation of the arcs of $M$,  the Hasse diagram remains acyclic.
\end{definition}

It is known that if $\le$ is the poset of inclusion of a simplicial complex and $M$ is a matching of the Hasse diagram of $\le$ then $M$ is acyclic if and only if there is no sequence of arcs $m_1, \ldots , m_n$ of $M$ such that $( u(m_{i+1}),d(m_i))$ is in the Hasse diagram for all $i \in \II1{n-1}$ as well as  $(u(m_1),d(m_n))$.

\begin{theorem}[Chari~\cite{Chari}]
    \label{theorem:CAM_colla}
    Let $\Delta$ be a simplicial complex.
    If the Hasse diagram of the inclusion poset of $\Delta$ admits a complete (\textup{i.e.} perfect) acyclic matching, then $\Delta$ is collapsible.
\end{theorem}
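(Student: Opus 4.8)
The plan is to realize the matching as an explicit sequence of elementary collapses, peeling off one matched pair at a time, and to argue by induction on the number of faces of $\Delta$. I begin with two structural observations about a complete matching $M$ of the Hasse diagram of the inclusion poset, which contains the empty face $\emptyset$ as its minimum. First, the only cofaces of $\emptyset$ are the vertices, so $M$ pairs $\emptyset$ with a single vertex $\{v_0\}$; this pair will never be collapsed, and the collapses will terminate at the point $\{\emptyset, \{v_0\}\}$. Second, every facet $F$ is maximal and hence has no coface, so, $M$ being complete, it must pair $F$ downward with one of its codimension-one faces. Consequently, whenever the downward partner $G$ of a facet $F$ satisfies $G \neq \emptyset$ and is contained in no facet other than $F$, the face $G$ is free and removing the two faces $G, F$ is a legitimate elementary collapse (the set $\{F' : G \subseteq F'\}$ being exactly $\{G, F\}$).

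The heart of the argument is to show, using acyclicity, that such a collapsible pair always exists as long as $\Delta \neq \{\emptyset, \{v_0\}\}$. I would prove this by contradiction. Let $\mathcal{G}$ be the auxiliary digraph whose nodes are the facets of dimension at least one, with an arc $F \to F'$ whenever the downward partner $G_F$ of $F$ is also contained in the facet $F' \neq F$. If no collapsible pair existed, then every node of $\mathcal{G}$ would have an outgoing arc, so the finite digraph $\mathcal{G}$ would contain a directed cycle $F_0 \to F_1 \to \cdots \to F_{k-1} \to F_0$. Writing $m_i$ for the matched arc with $u(m_i) = F_i$ and $d(m_i) = G_{F_i}$, each arc of this cycle says exactly that $d(m_i) \subset F_{i+1} = u(m_{i+1})$ (indices read modulo $k$), which is precisely the forbidden alternating sequence appearing in the characterization of acyclicity recalled just above. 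This contradiction yields a facet $F^*$ with no outgoing arc, whose partner $G^*$ is then free; note $G^* \neq \emptyset$ because the partner of $\emptyset$ is the vertex $\{v_0\}$, which is not a node of $\mathcal{G}$.

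It then remains to run the induction. Collapsing the pair $(F^*, G^*)$ produces $\Delta' = \Delta \setminus \{G^*, F^*\}$, and I would check the three routine bookkeeping points: $\Delta'$ is again a simplicial complex, the restriction $M' = M \setminus \{(F^*, G^*)\}$ is again a complete matching of the Hasse diagram of $\Delta'$, and $M'$ is again acyclic since its associated digraph is a subdigraph of that of $M$. Because $\Delta'$ has two fewer faces and still pairs $\emptyset$ with $\{v_0\}$, the induction hypothesis gives that $\Delta'$ collapses to $\{\emptyset, \{v_0\}\}$, and prepending the single collapse from $\Delta$ to $\Delta'$ finishes the proof; the base case is $\Delta = \{\emptyset, \{v_0\}\}$, which is already a point. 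The main obstacle is the existence statement of the second paragraph: the bookkeeping is straightforward, whereas guaranteeing that acyclicity always furnishes a genuinely \emph{free} face — rather than leaving every candidate face shared between two facets — is exactly where the hypothesis is essential, through the translation of a directed cycle in $\mathcal{G}$ into a forbidden alternating sequence.
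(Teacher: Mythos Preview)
The paper does not prove this theorem; it is quoted from Chari and used as a black box, so there is nothing to compare your argument against. That said, your outline is close to a correct proof, with one genuine gap.

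The gap is in the second paragraph. In your digraph $\mathcal{G}$ you draw an arc $F \to F'$ whenever the matched codimension-one face $G_F$ of the facet $F$ lies in some other facet $F'$. From a directed cycle $F_0 \to F_1 \to \cdots \to F_{k-1} \to F_0$ you then claim to read off a forbidden alternating sequence, but the characterization of acyclicity recalled in the paper requires that each pair $(u(m_{i+1}), d(m_i))$ be an \emph{arc of the Hasse diagram}, i.e.\ a covering relation. All you have is the bare inclusion $G_{F_i} \subset F_{i+1}$, which could a priori skip several levels, and a long inclusion does not in general yield a directed path in the modified Hasse diagram (intermediate arcs may be matched and hence reversed).

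The fix is short: along any arc $F \to F'$ of $\mathcal{G}$ one has $G_F \subsetneq F'$ (strict, since $G_F$ is not a facet), hence $|F'| > |G_F| = |F|-1$, i.e.\ $|F'| \ge |F|$. On a directed cycle this forces $|F_0| = |F_1| = \cdots = |F_{k-1}|$, so each $G_{F_i}$ has cardinality exactly $|F_{i+1}|-1$ and the inclusion $G_{F_i} \subset F_{i+1}$ \emph{is} a covering relation; moreover it is unmatched because the partner of $F_{i+1}$ is $G_{F_{i+1}} \ne G_{F_i}$. With this observation your cycle in $\mathcal{G}$ does translate into the forbidden alternating sequence, and the rest of your induction (the bookkeeping that $\Delta'$ is again a simplicial complex, that $M' = M \setminus \{(F^*,G^*)\}$ is again a complete acyclic matching, and that the base case is $\{\emptyset,\{v_0\}\}$) goes through as you wrote it.
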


\section{Our contribution}

\begin{theorem}
    \label{theorem:main}
    Let $R$ be a representation on a set $V$.
    Then $\Sigma(R)$ is collapsible.
\end{theorem}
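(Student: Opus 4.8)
The plan is to argue by induction on $|V|$ and, at each step, to build a complete acyclic matching of $H(R)$ so that Chari's criterion (Theorem~\ref{theorem:CAM_colla}) applies. For a vertex $x$ write $\mathrm{del}(x) = \{F \in \Sigma(R) : x \notin F\}$ and $\mathrm{lk}(x) = \{F \in \Sigma(R) : x \notin F,\ F \cup \{x\} \in \Sigma(R)\}$. The faces of $\Sigma(R)$ split into those avoiding $x$ (the deletion) and those containing $x$ (in bijection with $\mathrm{lk}(x)$ via $G \mapsto G \cup \{x\}$). If I can produce a complete acyclic matching $M_0$ of $\mathrm{del}(x)$ and a complete acyclic matching $M_1$ of $\mathrm{lk}(x)$, then lifting $M_1$ by adjoining $x$ to both endpoints of each of its arcs and taking the disjoint union with $M_0$ yields a complete matching of $H(R)$. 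Its acyclicity should follow from a one-way crossing argument: in the reoriented Hasse diagram the only arcs between an $x$-face and an $x$-free face are the covering arcs $G \cup \{x\} \to G$, all pointing toward the deletion, so no alternating cycle can return from $\mathrm{del}(x)$ into the star. Thus everything reduces to the collapsibility of $\mathrm{del}(x)$ and of $\mathrm{lk}(x)$ for a well-chosen $x$.

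I would take $x$ to be the maximum of $\le_1$. Since $x$ then dominates every subset of $V$ in $\le_1$, the condition ``$x$ dominates $F$'' is vacuous, and one checks directly that $\mathrm{del}(x)$ is exactly the supremum section of the representation obtained by restricting $\le_1, \ldots, \le_d$ to $V \setminus \{x\}$. Hence $\mathrm{del}(x)$ is collapsible by the induction hypothesis, and the first half is free.

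The main obstacle is the link. Unwinding the domination conditions, a set $F \subseteq V \setminus \{x\}$ lies in $\mathrm{lk}(x)$ iff for every $v \neq x$ there is an order $\le_i$ with $x \le_i v$ and $F \le_i v$. This is \emph{not} the supremum section of any representation on $V \setminus \{x\}$: the extra clause ``$x \le_i v$'' means each vertex $v$ may only use the orders in which it dominates $x$, and such a per-vertex restriction on the admissible orders cannot be expressed by plain linear orders (every vertex dominates $\emptyset$ in every order). I therefore expect the right move to be to \emph{strengthen the induction hypothesis} to a wider class of ``restricted dominance complexes'', given by orders $\le_1, \ldots, \le_d$ together with a nonempty set $S(v) \subseteq \{1, \ldots, d\}$ of admissible orders for each $v$, whose faces are the $F$ such that every $v$ dominates $F$ in some $\le_i$ with $i \in S(v)$. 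Supremum sections are the case $S(v) = \{1, \ldots, d\}$, and a computation shows this class is closed under $\mathrm{lk}(\cdot)$, and under $\mathrm{del}(\cdot)$ at an element that is maximal in an admissible order; in both cases one keeps the restricted orders on $V \setminus \{x\}$ and replaces $S(v)$ by $S(v)$, respectively by $\{i \in S(v) : x \le_i v\}$.

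It then remains to run the induction inside this larger class, and this is where the choice of pivot does the real work. I would always pivot on an $x$ that is maximal in $V$ for some \emph{admissible} order, i.e.\ $x = \max_{\le_i} V$ with $i \in S(x)$: such an $x$ keeps the deletion within the class (the clause for $v = x$ stays vacuous), and one shows it exists unless the complex has no vertex at all. The delicate points I expect to fight with are (i) that a valid pivot always exists and the recursion terminates at a single point, and (ii) that pivoting on a maximal element never creates an isolated vertex nor disconnects the complex — the failure mode that would break the matching combination, since a link reduced to two isolated points is not collapsible. This last phenomenon is exactly what rules out configurations such as a star centered at its own global maximum, and is why the pivot must be a maximum rather than an arbitrary vertex. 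Granting these structural facts, the induction closes and Theorem~\ref{theorem:CAM_colla} yields the collapsibility of $\Sigma(R)$.
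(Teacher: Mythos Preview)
Your route is genuinely different from the paper's. The paper inducts on the number $d$ of orders: setting $R'=(\le_1,\ldots,\le_{d-1})$, it shows $\Sigma(R')\subseteq\Sigma(R)$ and builds an explicit complete acyclic matching on $\Sigma(R)\setminus\Sigma(R')$ via the function $\psi(F)=\min_{<_d}\{x: x<_i\max_{\le_i}F\text{ for all }i<d\}$, then glues this to the inductively obtained matching on $\Sigma(R')$ using that every Hasse arc between the two pieces points from $\Sigma(R)\setminus\Sigma(R')$ into $\Sigma(R')$. No link/deletion, no enlarged class, and nothing is left to ``fight with''.

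Your link/deletion plan, by contrast, has a real gap exactly where you suspect. The class of ``restricted dominance complexes'' you propose to induct over already contains non-collapsible complexes, so the strengthened induction hypothesis is false as stated. Concretely, take $V=\{x,y,a\}$, orders
\[
x<_1 a<_1 y,\qquad y<_2 a<_2 x,\qquad y<_3 x<_3 a,
\]
and admissible sets $S(x)=\{3\}$, $S(y)=\{1\}$, $S(a)=\{1,2\}$. A direct check gives that the restricted dominance complex is $\{\emptyset,\{x\},\{y\}\}$: two isolated vertices, hence neither collapsible nor possessing a perfect matching. Moreover $y=\max_{\le_1}V$ with $1\in S(y)$, so $y$ is your ``valid pivot''; its link is $\{\emptyset\}$, and your scheme asks for a complete matching of the single face $\{y\}$ inside the star, which is impossible. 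Thus point (ii) in your list is not a technicality but an actual obstruction: the enlarged class is not closed under ``take the link of a valid pivot and stay collapsible''.

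It is conceivable that the restricted complexes that \emph{actually arise} along a recursion starting from a genuine supremum section carry extra structure ruling such examples out (the sets $S(v)$ produced by iterated links are of the special form $\{i:w_1\le_i v,\ldots,w_k\le_i v\}$), but isolating and proving that is a substantial new argument you have not supplied. Until you either (a) identify and prove closure for a narrower class, or (b) abandon the link/deletion recursion, the proposal does not yield a proof. The paper's order-deletion induction avoids the issue entirely and gives the matching explicitly.
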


Because of Theorem~\ref{theorem:CAM_colla}, it is enough to show that if $R$ is a representation on a set $V$, then $H(R)$ admits a complete acyclic matching.

\subsection{Proofs}

The proof relies on an induction on the dimension of the representation $R$.
Let $R$ be a $d$-representation on a set $V$.
We denote $R' = (\le_1 , \ldots , \le_{d-1})$ the $(d-1)$-representation on $V$ obtained from $R$ by deleting the order $\le_d$.

\begin{lemma}
    The simplicial complex $\Sigma(R')$ is a subcomplex of $\Sigma(R)$.
\end{lemma}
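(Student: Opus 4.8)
The plan is to unwind the definition of the supremum section and observe that passing from the $d$-representation $R$ to the smaller $(d-1)$-representation $R'$ can only remove faces, never create new ones. First I would recall that, by the remark following Definition~\ref{definition:representation}, both $\Sigma(R)$ and $\Sigma(R')$ are simplicial complexes; hence to conclude that $\Sigma(R')$ is a \emph{subcomplex} of $\Sigma(R)$ it suffices to establish the plain inclusion $\Sigma(R') \subseteq \Sigma(R)$ of face sets, since closure under inclusion then comes for free from $\Sigma(R')$ being a simplicial complex in its own right.

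To prove this inclusion I would take an arbitrary face $F \in \Sigma(R')$ and an arbitrary vertex $v \in V$. By definition of $\Sigma(R')$, the element $v$ dominates $F$ in $R'$, meaning there is an index $i \in \II{1}{d-1}$ with $f \le_i v$ for every $f \in F$. Since the orders of $R'$ are exactly $\le_1, \ldots, \le_{d-1}$, which are also orders of $R$, the very same index $i$ witnesses that $v$ dominates $F$ in $R$. As $v$ was arbitrary, every $v \in V$ dominates $F$ in $R$, and therefore $F \in \Sigma(R)$. This gives $\Sigma(R') \subseteq \Sigma(R)$ and completes the argument.

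Morally this is just the monotonicity of the operator $\Sigma$ with respect to adding orders: enlarging a representation gives each vertex strictly more orders in which it may dominate a given set, so it can only enlarge the supremum section. There is essentially no obstacle here; the one point to state carefully is the quantifier structure, namely that domination in a representation is an \emph{existential} condition over the orders (some $\le_i$) whereas membership in the supremum section is a \emph{universal} condition over the vertices (every $v$). It is precisely the existential quantifier over orders that makes the inclusion run in the direction $R' \subseteq R \implies \Sigma(R') \subseteq \Sigma(R)$, and not the reverse.
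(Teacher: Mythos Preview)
Your argument is correct and matches the paper's proof essentially verbatim: both pick $F\in\Sigma(R')$, note that every $v\in V$ dominates $F$ in some $\le_i$ with $i\le d-1$, and observe that the same $i$ witnesses domination in $R$. Your additional remarks on why the plain inclusion suffices for ``subcomplex'' and on the quantifier structure are sound embellishments but not a different approach.
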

\begin{proof}
	Let $F$ be a face of $\Sigma(R')$.
	As every element $x$ of $V$ dominates $F$ in at least one of the orders $\le_1, \ldots , \le_{d-1}$, the element $x$ also dominates $F$ in at least one of the orders $\le_1 , \ldots , \le_d$.
	We conclude that $F \in \Sigma(R)$.
\end{proof}

See Figure~\ref{fig:the_fig} to see what $\Sigma(R')$ is for the example representation of Definition~\ref{definition:representation}.
\begin{figure}[h]
    \centering
    \includegraphics{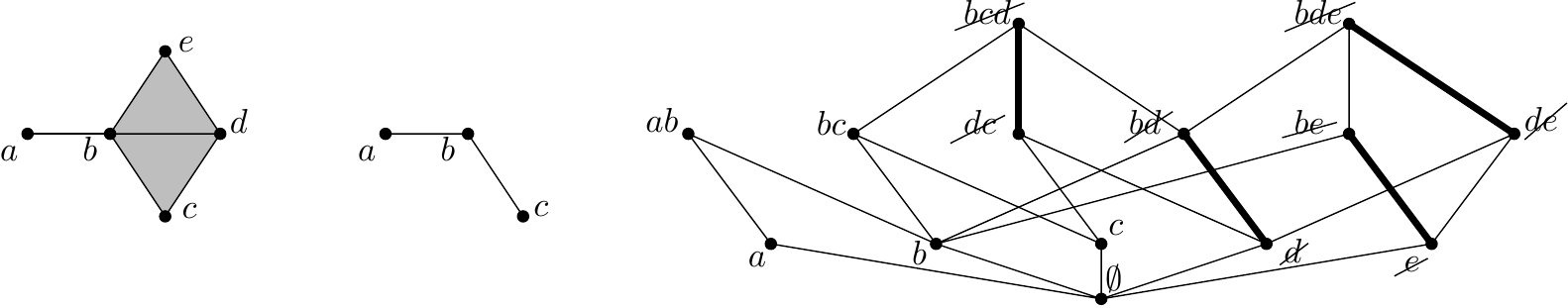}
    \caption{$R$ denotes the representation from the example. From left to right: $\Sigma(R)$ where a grey section corresponds to a face with $3$ elements, $\Sigma(R')$ where $R'$ is the representation obtained from $R$ by deleting the order $\le_3$, the Hasse diagram of $\Sigma(R)$ where the crossed-out faces are the faces from $\Sigma(R) \setminus \Sigma(R')$ and where the fat edges correspond to a complete acyclic matching of $\Sigma(R) \setminus \Sigma(R')$.}
    \label{fig:the_fig}
\end{figure}

\begin{lemma}
    We define the function $\psi$ by

    \[
	\FunDef{\psi}{\Sigma(R) \setminus \Sigma(R')}{V}{F}{\min_{<_d} \left\{x \in V : x <_i (\max_{\le_i} F) \; \; \forall i \in \II1{d-1} \right\} }
    \]
	Then the function $\psi$ is well-defined.
\end{lemma}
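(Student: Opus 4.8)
The plan is to verify the two conditions that make the expression defining $\psi(F)$ denote a unique element of $V$: that the maxima $\max_{\le_i} F$ inside the set-builder make sense, and that the set
\[
S_F = \left\{ x \in V : x <_i \max_{\le_i} F \; \; \forall i \in \II1{d-1} \right\}
\]
over which the minimum is taken is non-empty. Granting these, $S_F$ is a non-empty finite subset of $V$ and $<_d$ is a linear, hence total, order, so $\min_{<_d} S_F$ exists and is unique; this is exactly what ``well-defined'' asks for.

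First I would record that $F$ is non-empty. The empty set belongs to $\Sigma(R')$, since every $v \in V$ dominates $\emptyset$ vacuously in each order $\le_i$; therefore any $F \in \Sigma(R) \setminus \Sigma(R')$ satisfies $F \neq \emptyset$, and being a finite set it has a well-defined maximum $\max_{\le_i} F \in F$ for every $i \in \II1{d-1}$.

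The core step, and the only place the hypothesis $F \notin \Sigma(R')$ is used, is showing $S_F \neq \emptyset$. Unwinding Definition~\ref{definition:representation}, $F \notin \Sigma(R')$ provides a vertex $v \in V$ that dominates $F$ in none of the orders $\le_1, \ldots, \le_{d-1}$. I would fix such a $v$ and argue that it lies in $S_F$: for each $i \in \II1{d-1}$, since $v$ fails to dominate $F$ in $\le_i$ there is some $f \in F$ with $f \not\le_i v$, and because $\le_i$ is total this means $v <_i f \le_i \max_{\le_i} F$, whence $v <_i \max_{\le_i} F$. As this holds for all $i \in \II1{d-1}$, we get $v \in S_F$.

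I do not anticipate a real obstacle; the argument is a careful unwinding of the definitions. The one subtlety to watch is the use of totality of each $\le_i$ to convert ``$v$ does not dominate $F$'' into the strict inequalities $v <_i \max_{\le_i} F$ --- this is what guarantees the witness $v$ sits strictly below the maxima and so qualifies for membership in $S_F$. Assembling the three observations establishes that $\psi$ is well-defined.
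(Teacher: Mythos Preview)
Your proof is correct and follows the same approach as the paper: both use $F\notin\Sigma(R')$ to produce a witness $v$ strictly below each $\max_{\le_i}F$, so that the set over which the $\le_d$-minimum is taken is non-empty. Your version is simply more explicit, spelling out why $F\neq\emptyset$ (so the maxima exist) and why totality of each $\le_i$ converts non-domination into the strict inequalities $v<_i\max_{\le_i}F$.
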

\begin{proof}
	Let $F$ be a face of $\Sigma(R) \setminus \Sigma(R')$.
	We denote $f_i = \max_{\le_i} F$ for every $i \in \II1d$.
	As $F \not\in \Sigma(R')$, there exists an element $x \in V$ such that $\forall i \in \II1{d-1}, x <_i f_i $.
	So the minimum is taken in a non-empty set.
\end{proof}

We define the sets  $A = \{ F \in \Sigma(R) \setminus \Sigma(R') : \psi(F) \not\in F \}$ and $B = \{ F \in \Sigma(R) \setminus \Sigma(R') : \psi(F) \in F \}$.
The goal is to find a complete acyclic matching between $A$ and $B$.

\begin{lemma}
    \label{lemma:psi_property}
    For every $F \in A$, we have $F \cup \{\psi(F)\} \in B$, $\psi(F \cup \{ \psi(F)\}) = \psi(F)$ and $\max_{\le_d} F < \psi(F)$.
    
    For every $F \in B$, we have $F \setminus \{\psi(F)\} \in A$ and $\psi(F \setminus \{ \psi(F)\}) = \psi(F)$.
\end{lemma}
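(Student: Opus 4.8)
The plan is to treat $\psi$ as a function that depends only on the first $d-1$ maxima of a face, and to observe that neither adding $\psi(F)$ to a face of $A$ nor deleting $\psi(F)$ from a face of $B$ disturbs any of those maxima. Writing $f_i = \max_{\le_i} F$, $p = \psi(F)$, and recording the witness set $S(F) = \{x \in V : x <_i f_i \; \forall i \in \II1{d-1}\}$ so that $p = \min_{<_d} S(F)$, the two identities $\psi(F \cup \{p\}) = p$ and $\psi(F \setminus \{p\}) = p$ will then reduce to the single fact $S(F \cup \{p\}) = S(F) = S(F \setminus \{p\})$, leaving membership in $\Sigma(R) \setminus \Sigma(R')$ and the inequality as the real content. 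The elementary fact I would invoke throughout is that $x$ dominates $F$ in $\le_i$ if and only if $f_i \le_i x$, so the negation of domination is exactly $x <_i f_i$.

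For the $A$-direction I would establish the inequality $\max_{\le_d} F < \psi(F)$ first, since it controls the last maximum of $F \cup \{p\}$. Because $p \in S(F)$ we have $p <_i f_i$ for $i \in \II1{d-1}$; since $p \notin F$ we have $p \neq f_d$, so if $f_d <_d p$ failed we would get $p <_d f_d$, hence $p <_i f_i$ in every order, and then $p$ would dominate $F$ in no order, contradicting $F \in \Sigma(R)$. Thus $f_d <_d p$. Consequently $\max_{\le_i}(F \cup \{p\}) = f_i$ for $i < d$ while $\max_{\le_d}(F \cup \{p\}) = p$, so $S(F \cup \{p\}) = S(F)$ and therefore $\psi(F \cup \{p\}) = p \in F \cup \{p\}$, placing $F \cup \{p\}$ in $B$ once its membership in $\Sigma(R) \setminus \Sigma(R')$ is secured.

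The crux, and the step I expect to be the main obstacle, is showing $F \cup \{p\} \in \Sigma(R)$. I would fix $v \in V$ and split on whether $v$ dominates $F$ in some order $\le_j$ with $j < d$. If it does, then $f_j \le_j v$, and $f_j$ is still the $\le_j$-maximum of $F \cup \{p\}$, so $v$ dominates $F \cup \{p\}$ in $\le_j$. Otherwise $v <_j f_j$ for all $j < d$, i.e. $v \in S(F)$; the minimality of $p$ in $<_d$ then gives $p \le_d v$, and since $p$ is the $\le_d$-maximum of $F \cup \{p\}$, the vertex $v$ dominates $F \cup \{p\}$ in $\le_d$. This is precisely where the choice of $\psi(F)$ as the $<_d$-\emph{minimum} of $S(F)$ is indispensable. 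Finally $F \cup \{p\} \notin \Sigma(R')$ because $p$ itself, lying in $S(F)$, satisfies $p <_j f_j$ for every $j < d$ and so dominates $F \cup \{p\}$ in no order among the first $d-1$.

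For the $B$-direction I would argue symmetrically but more cheaply. Since $p = \psi(F) \in F$ and $p <_i f_i$ forces $p \neq f_i$ for $i < d$, deleting $p$ leaves every one of the first $d-1$ maxima intact, so $\max_{\le_i}(F \setminus \{p\}) = f_i$, whence $S(F \setminus \{p\}) = S(F)$ and $\psi(F \setminus \{p\}) = p \notin F \setminus \{p\}$. Membership $F \setminus \{p\} \in \Sigma(R)$ is immediate from closure under subsets, and $F \setminus \{p\} \notin \Sigma(R')$ once more because $p \in S(F \setminus \{p\})$ exhibits a vertex dominating it in no order $\le_j$ with $j < d$. This places $F \setminus \{p\}$ in $A$ and, together with the $A$-direction, exhibits $F \mapsto F \cup \{\psi(F)\}$ and $F \mapsto F \setminus \{\psi(F)\}$ as mutually inverse maps between $A$ and $B$.
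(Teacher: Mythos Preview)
Your proof is correct and follows essentially the same approach as the paper's: both establish $f_d <_d \psi(F)$ via the domination requirement, deduce that the first $d-1$ maxima are unchanged by adding or removing $\psi(F)$, and use the $<_d$-minimality of $\psi(F)$ to verify $F\cup\{\psi(F)\}\in\Sigma(R)$. Your explicit introduction of the witness set $S(F)$ and your spelled-out $B$-direction are cosmetic improvements over the paper, which leaves the second half as ``in the same manner,'' but the underlying argument is identical.
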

\begin{proof}
	Let $F$ be in $A$, we denote $F' = F \bigcup \{\psi(F)\}$.
	For every $i \in \II1d$, we denote $f_i$ (resp. $f_i'$) the maximum of $F$ (resp. $F'$) in the order $\le_i$.
	By definition of $\psi$, $\psi(F) <_i f_i$ for every $i \in \II1{d-1}$.
	Furthermore, $f_d <_d \psi(F)$, otherwise $\psi(F)$ would not dominate $F$.
	Thus, $f_i = f_i'$ for every $i \in \II1{d-1}$ and $f_d <_d f_d' = \psi(F)$.
	Suppose that $F' \not\in \Sigma(R)$.
	Then there would exist $a$ such that $a$ does not dominate $F'$ in any order.
	Thus $a <_i f_i (=f_i')$ for every $i \in \II1{d-1}$ and $a <_d \psi(F)$ which contradicts the minimality of $\psi(F)$.
	We deduce that $F' \in \Sigma(R)$.
	
	As $\psi(F) <_i f_i'$ for every $i \in \II1{d-1}$, $\psi(F)$ does not dominate $F'$ in $R'$, $F' \not\in \Sigma(R')$ and $\psi(F') \le_d \psi(F)$.
	If $\psi(F') <_d \psi(F)$ then we would have $\psi(F') <_i f_i'$ for every $i \in \II1d$ as $f_d' = \psi(F)$.
	We deduce that $\psi(F') = \psi(F) = f_d' \in F'$.
	Finally, we conclude that $F \cup \{\psi(F)\} \in B$.
	
	The second property can be proved in the same manner.
\end{proof}

\begin{lemma}
    \label{lemma:CAM}
    The Hasse diagram of the inclusion poset of $\Sigma(R) \setminus \Sigma(R')$ admits a complete acyclic matching.
\end{lemma}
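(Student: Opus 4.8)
The plan is to read the matching directly off the function $\psi$ and then check the two properties required by the characterization of acyclicity stated before Theorem~\ref{theorem:CAM_colla}: completeness and absence of cycles. Concretely, I would match each face $F \in A$ with the face $F \cup \{\psi(F)\}$. By the first part of Lemma~\ref{lemma:psi_property} this face lies in $B$, and since the two faces differ by the single element $\psi(F)$, the pair $(F \cup \{\psi(F)\}, F)$ is an arc of the Hasse diagram of $\Sigma(R) \setminus \Sigma(R')$. Thus $M = \{\,(F \cup \{\psi(F)\}, F) : F \in A\,\}$ is a genuine matching of that Hasse diagram.

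For completeness I would verify that $F \mapsto F \cup \{\psi(F)\}$ is a bijection from $A$ onto $B$. Lemma~\ref{lemma:psi_property} supplies the candidate inverse $G \mapsto G \setminus \{\psi(G)\}$ from $B$ to $A$, together with the invariances $\psi(F \cup \{\psi(F)\}) = \psi(F)$ and $\psi(G \setminus \{\psi(G)\}) = \psi(G)$. These two identities show at once that the maps compose to the identity on each side, using $\psi(F) \notin F$ for $F \in A$ and $\psi(G) \in G$ for $G \in B$. Since $A$ and $B$ partition $\Sigma(R) \setminus \Sigma(R')$ by definition, every vertex of the Hasse diagram is covered exactly once, so $M$ is complete.

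The substantive part is acyclicity. I would argue by contradiction, ruling out a sequence of distinct arcs $m_1, \ldots, m_n$ of $M$, written $m_i = (B_i, A_i)$ with $A_i \in A$ and $B_i = A_i \cup \{\psi(A_i)\}$, such that $B_{i+1}$ covers $A_i$ for all $i$ taken cyclically. The invariant driving the proof is the value $\psi(A_i)$ read in the order $\le_d$. Comparing cardinalities, each cover forces $A_i = B_{i+1} \setminus \{w\}$ for a single element $w$, and since the arcs are distinct one has $A_i \neq A_{i+1}$, hence $w \neq \psi(A_{i+1})$ and therefore $\psi(A_{i+1}) \in A_i$. The inequality $\max_{\le_d} A_{i+1} < \psi(A_{i+1})$ of Lemma~\ref{lemma:psi_property} says exactly that $\psi(A_{i+1})$ is the $\le_d$-maximum of $B_{i+1}$; as it survives in the subset $A_i$, it is also $\max_{\le_d} A_i$. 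Combined with $\max_{\le_d} A_i < \psi(A_i)$ this gives $\psi(A_{i+1}) <_d \psi(A_i)$, and chaining around the cycle produces $\psi(A_1) >_d \cdots >_d \psi(A_n) >_d \psi(A_1)$, which is impossible.

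The step I expect to be the main obstacle is precisely the identification $\max_{\le_d} A_i = \psi(A_{i+1})$. It hinges on the fact, guaranteed by Lemma~\ref{lemma:psi_property}, that the element $\psi(A_{i+1})$ added when passing from $A_{i+1}$ to $B_{i+1}$ sits strictly $\le_d$-above everything already present, so that deleting any \emph{other} element to descend to $A_i$ cannot dislodge it from being the $\le_d$-maximum. Once this is secured, the strictly decreasing potential $\psi(\cdot)$ in the order $\le_d$ makes the nonexistence of cycles immediate, and Lemma~\ref{lemma:CAM} follows.
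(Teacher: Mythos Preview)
Your proposal is correct and follows essentially the same approach as the paper: define the matching via $F \mapsto F \cup \{\psi(F)\}$, use Lemma~\ref{lemma:psi_property} to exhibit the inverse $G \mapsto G \setminus \{\psi(G)\}$ for completeness, and rule out cycles by showing $\psi(A_{i+1}) \in A_i$ forces $\psi(A_{i+1}) <_d \psi(A_i)$ along any alternating cycle. Your intermediate identification $\max_{\le_d} A_i = \psi(A_{i+1})$ is correct but slightly more than needed; the paper just uses $\psi(A_{i+1}) \in A_i \Rightarrow \psi(A_{i+1}) \le_d \max_{\le_d} A_i <_d \psi(A_i)$ directly.
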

See Figure~\ref{fig:the_fig} to see an example of a complete acyclic matching.
\begin{proof}
    We define the function $\varphi : A \to B$ defined by $\varphi(F) = F \cup \{ \psi(F)\}$ for every $F \in A$.
    %\[
    %\FunDef{\varphi}{A}{B}{F}{F \cup \{ \psi(F) \}}
    %\]
	Let us show that $\varphi$ is a bijection.
	To do so, we define the function $\eta : B \to A$ by $\eta(F) = F \setminus \{\psi(F)\}$ where $F \in B$.
	Lemma~\ref{lemma:psi_property} implies that $\eta$ is well defined, that $\eta \circ \varphi = \mathsf{id}_A$, and that $\varphi \circ \eta = \mathsf{id}_B$.
	Thus $\varphi$ is a bijection and $\varphi$ defines a complete matching $M = \{ (F,\varphi(F)) : F \in A \}$ between $A$ and $B$.
	
	%Let us show that $M$ is acyclic.
	%Towards this goal, we show that for every $F \in A$, for every $f \in F$, $f <_d \psi(F)$.
	%Let $F \in A$, then $F \in \Sigma(R) \setminus \Sigma(R')$ and $\psi(F) \not\in F$.
	%We denote $f_i = \max_{\le_i} F$ for every $i \in \II1d$.
	%By definition of $\psi$, $\psi(F) <_i f_i$ for every $i \in \II1{d-1}$.
	%As $F \in \Sigma(R)$, $\psi(F)$ dominates $F$ in $R$ which is only possible in the order $\le_d$.
	%Therefore $f_d \le_d \psi(F)$.
	%As $\psi(F) \not\in F$, we have $f <_d \psi(F)$ for every $f \in F$.
	
	Suppose that $M$ is not acyclic: there exists a sequence $m_1, \ldots , m_n$ of arcs of $M$ where  $m_i = (F_i \cup \{\psi(F_i)\}, F_i)$ for a $F_i \in A$ for every $i \in \II1n$ such that $(F_{i+1} \cup \{\psi(F_{i+1})\}),F_i)$ is in the Hasse diagram for every $i \in \II1{n-1}$ as well as $(F_1 \cup \{\psi(F_1)\},F_n)$.
    As for every $i \in \II1{n-1}$, $F_i \subseteq F_{i+1} \cup \psi(F_{i+1})$ and $|F_i| +1 = |F_{i+1} \cup \psi(F_{i+1})|$, we deduce that $\psi(F_{i+1}) \in F_i$.
    Therefore $\psi(F_{i+1}) <_d \psi(F_i)$ for every $i \in \II1{n-1}$ and thus $\psi(F_n) <_d \psi(F_1)$.
    As $(F_1 \cup \{\psi(F_1)\},F_n )$ is in the Hasse diagram, we show in the same way that $\psi(F_1) <_d \psi(F_n)$ which contradicts the fact that $\psi(F_n) <_d \psi(F_1)$.
    We conclude that $M$ is a perfect acyclic matching of $\Sigma(R)\setminus \Sigma(R')$.
\end{proof}

We can now prove Theorem~\ref{theorem:main}.
\begin{proof}
	We prove the result by induction on the number of orders.
	Let $R = (\le_1)$ be a $1$-representation on $V$.
	We denote $m_1$ the minimum on $V$ in $\le_1$.
	Let $F$ be a face of $\Sigma(R)$ which contains an element $x$ different from $m_1$.
	Then $m_1$ does not dominate $F$ in $R$ as $m_1 <_1 x$.
	The set $\{m_1\}$ is a face of $\Sigma(R)$ as every element of $V$ dominates $\{m_1\}$ in the order $\le_1$.
	Thus $\Sigma(R) = \{\emptyset, \{m_1\} \}$ and $H(R)$ admits a complete acyclic matching $(\{m_1\},\emptyset)$.
	The base case is therefore true.
	
	Let $d \ge 2$, we now suppose that the result is true for any $(d-1)$-representation on $V$.
	Let $R= (\le_1 , \ldots , \le_d)$ be a $d$-representation on $V$.
	We denote $R' = (\le_1 , \ldots , \le_{d-1})$ the $(d-1)$-representation on $V$ obtained from $R$ by deleting the order $\le_d$.
    We define $K$ as $\Sigma(R) \setminus \Sigma(R')$.
    Because of Lemma~\ref{lemma:CAM}, the Hasse diagram of the inclusion poset of $K$ admits a complete acyclic matching $M_1$.

    By induction hypothesis, $H(R')$ admits a complete acyclic matching $M_2$.
    Thus $M_1 \cup M_2$ is a complete matching of $H(R)$.
    Furthermore, $H(R)$ is the union of $H(R')$ and the Hasse diagram of the inclusion poset of $K$ with some arcs between $K$ and $\Sigma(R')$.
    If an arc between $K$ and $\Sigma(R')$ is oriented from $\Sigma(R')$ to $K$, then there would be a face $F$ of $\Sigma(R')$ that would contain a face $G$ of $K$.
    As $\Sigma(R')$ is closed by inclusion, then $G$ is also in $\Sigma(R')$ which contradicts the definition of $K$.
    Therefore the arcs between $K$ and $\Sigma(R')$ are oriented from $K$ to $\Sigma(R')$ and we deduce that $M_1 \cup M_2$ is a perfect acyclic matching of $H(R)$.
    We conclude by induction.
\end{proof}

\vspace{-0.7cm}

\small


\begin{thebibliography}{1}

\bibitem{Bing} R.H. Bing, Some aspects of the topology of 3-manifolds related to the Poincaré conjecture, Lectures on Modern Mathematics , II , Wiley, pp. 93–128, 1964.

\bibitem{BPS} D. Bayer, I. Peeva, B. Sturmfels, Monomial resolutions, Mathematical Research Letters 5, 31 – 46, 1998.

\bibitem{Chari} M. K. Chari, On discrete Morse functions and combinatorial decompositions. Discrete Mathematics, 217(1-3), 101-113, 2000.

\bibitem{DM} B. Dushnik and E.W. Miller, Partially ordered sets,
   \emph{American Journal of Mathematics}, 63(3): 600-610, 1941.
  
\bibitem{EFKU} W. Evans, S. Felsner, G. Kobourov, and T. Ueckerdt,
  Graphs admitting d-realizers: spanning-tree-decompositions and
  box-representations \emph{Proc. of EuroCG 14}, 2014.
  
\bibitem{Felsner} S. Felsner, Empty Rectangles and Graph Dimension, {\it ArXiv preprint} arXiv:math/0601.767, 2006.
  
\bibitem{FK} S. Felsner and S. Kappes. Orthogonal surfaces and their CP-orders. Order, vol. 25, no 1, p. 19-47, 2008.

\bibitem{Forman} R. Forman, Witten–Morse theory for cell complexes. Topology,  vol. 37, no 5, p. 945-979, 1998.

\bibitem{GI} D. Gonçalves and L. Isenmann, Dushnik-Miller dimension of TD-Delaunay complexes, \emph{Proc. of EuroCG 17}, 2017.

\bibitem{Kozlov} D. Kozlov. Combinatorial algebraic topology. Springer Science and Business Media, 2007.
  
\bibitem{MS} E. Miller and B. Sturmfels, Combinatorial Commutative Algebra, Graduate Texts in Mathematics, Springer-Verlag, 2004.
  
\bibitem{Ossona} P. Ossona de Mendez, Geometric Realization of
  Simplicial Complexes, \emph{Proc. of Int. Symp. on Graph Drawing
    1999} {\it LNCS} 1731, 323-332, 1999.

\bibitem{Scarf} H. Scarf, The Computation of Economic Equilibria, vol. 24 of Cowles Foundation Monograph, Yale University Press, 1973.

\bibitem{Schnyder} W. Schnyder, Planar graphs and poset dimension,
   \emph{Order} 5(4): 323-343, 1989.

\bibitem{Tancer} M. Tancer, Recognition of collapsible complexes is NP-complete, Discrete and Computational Geometry, 55(1), 21-38, 2016.

\bibitem{Trotter} W. T. Trotter. Combinatorics and partially ordered sets: Dimension theory. {\it Johns Hopkins Series in the Mathematical Sciences}, The Johns Hopkins University Press, 1992.

\bibitem{VKF}  I. A. Volodin, V. E. Kuznetsov and A. T. Fomenko. The problem of discriminating algorithmically the standard three-dimensional sphere. Russian Mathematical Surveys, 29(5), 71-172, 1974.

\bibitem{Whitehead} J. H. C. Whitehead, Simplicial Spaces, Nuclei and m-groups. Proceedings of the London mathematical society, 2(1), 243-327, 1939.

\bibitem{Zeeman} E. C. Zeeman, On the dunce hat. Topology, vol. 2, no 4, p. 341-358, 1963.




\end{thebibliography}
\end{document}